\documentclass{article}
\usepackage{algorithm2e}
\usepackage{amsfonts}
\usepackage{amsthm, amsmath}
\usepackage{booktabs}
\newtheorem{lem}{Lemma}
\newtheorem{thm}{Theorem}
\usepackage{geometry}
\usepackage{graphicx}
\usepackage[unicode=true,
 bookmarks=false,backref=page,
 breaklinks=false,pdfborder={0 0 1},colorlinks=true]
 {hyperref}
\hypersetup{
 linkcolor=blue!50!black,citecolor=blue!50!black,anchorcolor=blue!50!black}
\usepackage{multirow}
\usepackage{natbib}
\usepackage{url}
\usepackage{wrapfig}
\usepackage{xcolor}
\usepackage{floatrow}
\newfloatcommand{capbtabbox}{table}[][\FBwidth]
\usepackage{blindtext}
\usepackage{enumitem}
\newcommand{\pr}{\mathbb{P}}

\definecolor{hlcolor}{RGB}{200, 200, 200}
\definecolor{fadecolor}{RGB}{150, 150, 150}

\newcommand{\ourtitle}{HD-cos Networks: Efficient Neural Architectures\\for Secure Multi-Party Computation}
\title{\ourtitle}

\author{
Wittawat Jitkrittum, \quad
Michal Lukasik,\quad
Ananda Theertha Suresh,\\[1mm]
Felix Yu,\quad
Gang Wang \\[5mm]
\texttt{\{wittawat, mlukasik, theertha, felixyu, wanggang\}@google.com} \\[4mm]
Google, New York, USA
}
\begin{document}

\maketitle

\begin{abstract}
Multi-party computation (MPC) is a branch of cryptography where multiple non-colluding  parties execute a well designed protocol to securely compute a function. With the non-colluding party assumption, MPC has a cryptographic guarantee that the parties will not learn sensitive information from the computation process, making it an appealing framework for applications that involve privacy-sensitive user data.
In this paper, we study
training and inference of neural networks under the MPC setup. This is challenging because the elementary operations of neural networks such as the ReLU activation function and matrix-vector multiplications are very expensive to compute due to the added multi-party communication overhead.
To address this, we propose the HD-cos network that uses 1) cosine as activation function, 2) the Hadamard-Diagonal transformation to replace the unstructured linear transformations. We show that both of the approaches enjoy strong theoretical motivations and efficient computation under the MPC setup. We demonstrate on multiple public datasets that HD-cos matches the quality of the more expensive baselines.

\end{abstract}

\section{Introduction}
Machine learning models are often trained with user data that may contain private information. For example, in healthcare patients diagnostics contain sensitive information and in financial sectors, user data contains potentially private information such as salaries and taxes. In these applications, storing the user data in plain text format at a centralized server can be privacy invasive. There have been several efforts to design secure and private ways of learning and inferring machine learning models. In this work, we focus on secure \emph{multi-party computation} (MPC), a branch of cryptography that allows parties to collaboratively perform computations on data sets without revealing the data they possess to each other \citep{evans2017pragmatic}. Recently, there are several research papers that proposed to train and infer machine learning models in a secure fashion via MPC \citep{gilad2016cryptonets,graepel2012ml, obla2020effective}. Loosely speaking, in the MPC setup, a piece of sensitive data is split into multiple shards called secret shares, and each secret share is stored with a different party. These parties are further chosen such that their fundamental interests are to protect sensitive data and hence can be viewed as non-colluding. The cryptography guarantee states that unless all the parties (or at least k out of all parties, depending on the cryptographic algorithm design) collude, sensitive data cannot be reconstructed and revealed to anyone/anything.

Training in the MPC setup is challenging due to several reasons. Firstly, only limited data types (e.g. integer) and/or operations (e.g. integer addition and multiplication) are natively supported in most MPC algorithms. This increases the complexity to support non-trivial operations over MPC. Secondly, since data is encrypted into multiple secret shares and stored in multiple parties, one share per party, directly training a machine learning model on this data can be expensive, both in terms of communication between the servers and computation at the individual servers.
More concretely if simple operations like addition and multiplications take orders of nanoseconds in the normal computation scenarios, in the MPC setup they can take milliseconds or more, if the operation requires the parties in the MPC setup to communicate with each other.  Furthermore, one of the key bottlenecks of secret sharing mechanisms is that most non-linear operations e.g., $\text{ReLU}(x) = \max(0,x)$, cannot be efficiently computed.

In this paper, we address these questions by proposing a general network construct that can be implemented in MPC setup efficiently. Our proposal consists of two parts: 1) use the cosine function as the activation function, and 2) use a structured weight matrix based on the Hadamard transform in place of the standard fully connected layer.  We provide an algorithm to compute cosine under two-party computation (2PC) setup.  Unlike ReLU which involves multiple rounds of communication, our proposed algorithm for cosine requires only two online rounds of communication between the two computation servers. The use of the proposed Hadamard transform for weight matrices means that the number of parameters in each dense layer scales linearly with the input dimenion, as opposed to the standard dense layer which scales quadratically.  We demonstrate on a number of challenging datasets that the combination of these two constructs leads to a model that is as accurate as the commonly used ReLU-based model with fully connected layers.

The rest of the paper is organized as follows. We first overview multiparty computation in Section~\ref{sec:mpc} and then overview related works in Section~\ref{sec:related}. We present the cosine activation function and structured matrix transformation with theoretical motivations and analysis of their computational efficiency in Section~\ref{sec:activation} and Section~\ref{sec:hadamard}. We then provide extensive experimental evaluations on several datasets in Section~\ref{sec:exp}.

\section{Multi-party computation}
\label{sec:mpc}
\emph{Secure multi-party computation} (MPC) is a branch of cryptography that allows two or more parties to collaboratively perform computations on data sets without revealing the data they possess to each other \citep{evans2017pragmatic}. Following earlier works \citep{liu2017oblivious, mohassel2017secureml, kelkar2021secure}, we focus on the two party computation (2PC) setup. Our results can be extended to multi-party setup by existing mechanisms. If the two parties are non-colluding, then 2PC setup guarantees that the two parties will not learn anything from the computation process and hence there is no data leak.

During training in the 2PC setup, each party receives features and labels of the training dataset in the form of secret shares. They compute and temporarily store all intermediate results in the form of secret shares. Thus during training, both the servers collaboratively learn a machine learning model, which again is split between two parties. Upon completion of the training process, the final result, i.e. the ML model itself, composed of trained parameters, are in secret shares to be held by each party in the 2PC setup. At prediction time, each party receives features in secret shares, and performs the prediction where all intermediate results are in secret shares. The MPC cluster sends the prediction results in secret share back to the caller who provided the features for prediction. The caller can combine all shares of the secret prediction result into its plaintext representation. In this entire training/prediction process, the MPC cluster does not learn any sensitive data.

While MPC may provide security/privacy guarantee and is Turing complete, it might be significantly slower than equivalent plaintext operation. The overall performance of MPC is determined by the computation and communication cost. The computation cost in MPC is typically higher than the cost of equivalent operations in cleartext. The bigger bottleneck is the communication cost among the parties in the MPC setup. The communication cost has three components.
\begin{itemize}
 \item \textbf{Number of rounds}: the number of times that parties in the MPC setup need to communicate/synchronize with each other to complete the MPC crypto protocol. For 2PC, this is often equivalent to the number of Remote Procedure Calls (or RPCs) that the two parties need per the crypto protocol design. Many MPC algorithms differentiate offline rounds vs. online rounds, where the former is input-independent and can be performed asynchronously in advance, and the latter is input-dependent, is on the critical path for the computation and must be performed synchronously. For example, addition of additive secret shares requires no online rounds, whereas multiplication of additive secret shares requires one online round.
 \item \textbf{Network bandwidth}: the number of bytes that parties in the MPC setup need to send to each other to complete the MPC crypto protocol. For 2PC, each RPC between the two parties has a request and response. The bandwidth cost is the sum of all the bytes to be transmitted in the request and response per the crypto protocol.
 \item \textbf{Network latency}: the network latency is a property of the network connecting all parties in the MPC setup. It depends on the network technology (e.g. 10 Gigabit Ethernet or 10GE), network topology, as well as applicable network Quality of Service (QoS) settings and the network load.

\end{itemize}
In this work, we propose neural networks which can be implemented with a few online rounds of communication and little network bandwidth. Following earlier works \cite{}, We consider neural network architectures where the majority of the parameters are on the fully connected layers.
\begin{itemize}
  \item   We propose and systematically study using cosine as the activation and demonstrate that it achieves comparable performance to existing activation and can be efficiently implemented with two online rounds.
  \item We show that dense matrices in neural networks can be replaced by structured matrices, which have comparable performance to existing neural network architectures and can be efficiently implemented in MPC setup by reducing the bandwidth. We show that by using structured matrices, we can reduce the number of per-layer secure multiplications from $\mathcal{O}(d^2)$ to $\mathcal{O}(d)$, where $d$ is the layer width, thus reducing the memory bandwidth.
\end{itemize}

\section{Related works}
\label{sec:related}
\paragraph{Neural network inference under MPC.}
\cite{barni2006privacy} considered inference of neural networks in the MPC setup, where the linear computations are done at the servers in the encrypted field and the non-linear activations are computed at the clients directly in plaintext. The main caveat of this approach is that, to compute a $L$ layer neural network, $L$ rounds of communication is required between server and clients which can be prohibitive and furthermore intermediate results are leaked to the clients, which may not be desired. To overcome the information leakage, \cite{orlandi2007oblivious} proposed methods to hide the results of the intermediate data from the clients;  the method still requires multiple rounds of communication. \cite{liu2017oblivious} proposed algorithms that allows for evaluating arbitrary neural networks;  the intermediate computations (e.g., sign$(x)$) are much more expensive than summations and multiplications.

\paragraph{Efficient activation functions.}
Since inferring arbitrary neural networks can be inefficient, several papers have proposed different activation functions which are easy to compute. \cite{gilad2016cryptonets} proposed to use the simple square activation function. They also proposed to use mean-pooling instead of max-pooling. \cite{chabanne2017privacy} also noticed the limited accuracy guarantees
of the square function and proposed to approximate ReLU with a low degree polynomial and added a batch  normalization layer to improve accuracy. \cite{wu2018ppolynets} proposed to train a polynomial as activation to improve the performance. \cite{obla2020effective} proposed a different algorithm for approximating activation methods and showed that it achieves superior performance on several image recognition datasets. Recently \cite{knott2021crypten} released a library for training and inference of machine learning models in the multi-party setup.

\paragraph{Cosine as activation function.}
Related to using cosine as the activation function,
\cite{yu2015compact}  proposed the Compact Nonlinear Maps method which is equivalent of using cosine as the activation function for neural networks with one hidden layer.
 \cite{parascandolo2016taming} studied using the sine function as activation. \cite{xie2019deep} used cosine activation in deep kernel learning.
\cite{noel2021growing} proposed a variant of the cosine function called  Growing Cosine Unit $x\cos(x)$ and shows that it can speed up training and reduce parameters in convolutions neural networks.  All the above works are not under the MPC setup.

\paragraph{Training machine learning models under MPC.}
\cite{graepel2012ml} proposed to use training algorithms that can be expressed as low degree polynomials, so that the training phase can be done over encrypted data.  \cite{aslett2015review} proposed algorithms to train models such as random forests over training data. \cite{mohassel2017secureml} proposed a 2PC setup for training and inference  various types of models including neural networks where data is distributed to two non-colluding servers. \cite{kelkar2021secure} proposed an efficient algorithm for Poisson regression by adding a secure exponentiation primitive.

\paragraph{Combining with differential privacy.}
We note that while the focus of this work is multi-party computation, it can be combined with other privacy preserving techniques such as differential privacy \citep{dwork2014algorithmic}. There are some recent works which combine MPC with differential privacy \citep{jayaraman2018distributed}. Systematically evaluating performances with the combination of our proposed technique and that of differential privacy remains an interesting future direction.

\section{Cosine as activation}
\label{sec:activation}

\begin{figure}
\begin{floatrow}

\capbtabbox{%
\begin{small}
\begin{tabular}{ p{5cm} |  c }
 \hline Activation  & Online rounds \\    \hline
 None & 0 \\
 $x^2$ \citep{gilad2016cryptonets} & 1 \\
        $e^x$ - 1 \citep{kelkar2021secure} & 1  \\
      ReLU Polyfit$(3)$ \citep{obla2020effective} & 2\\
      \textbf{Cosine} [this work]& 2  \\  \hline
\end{tabular}
\label{tab:activations}
\end{small}
}{%
  \caption{Activation functions and their communication costs. Since ReLU cannot be implemented efficiently in the MPC setup, we have omitted it.}%
}
\ffigbox[5.5cm]{%
 \includegraphics[width=0.8\linewidth]{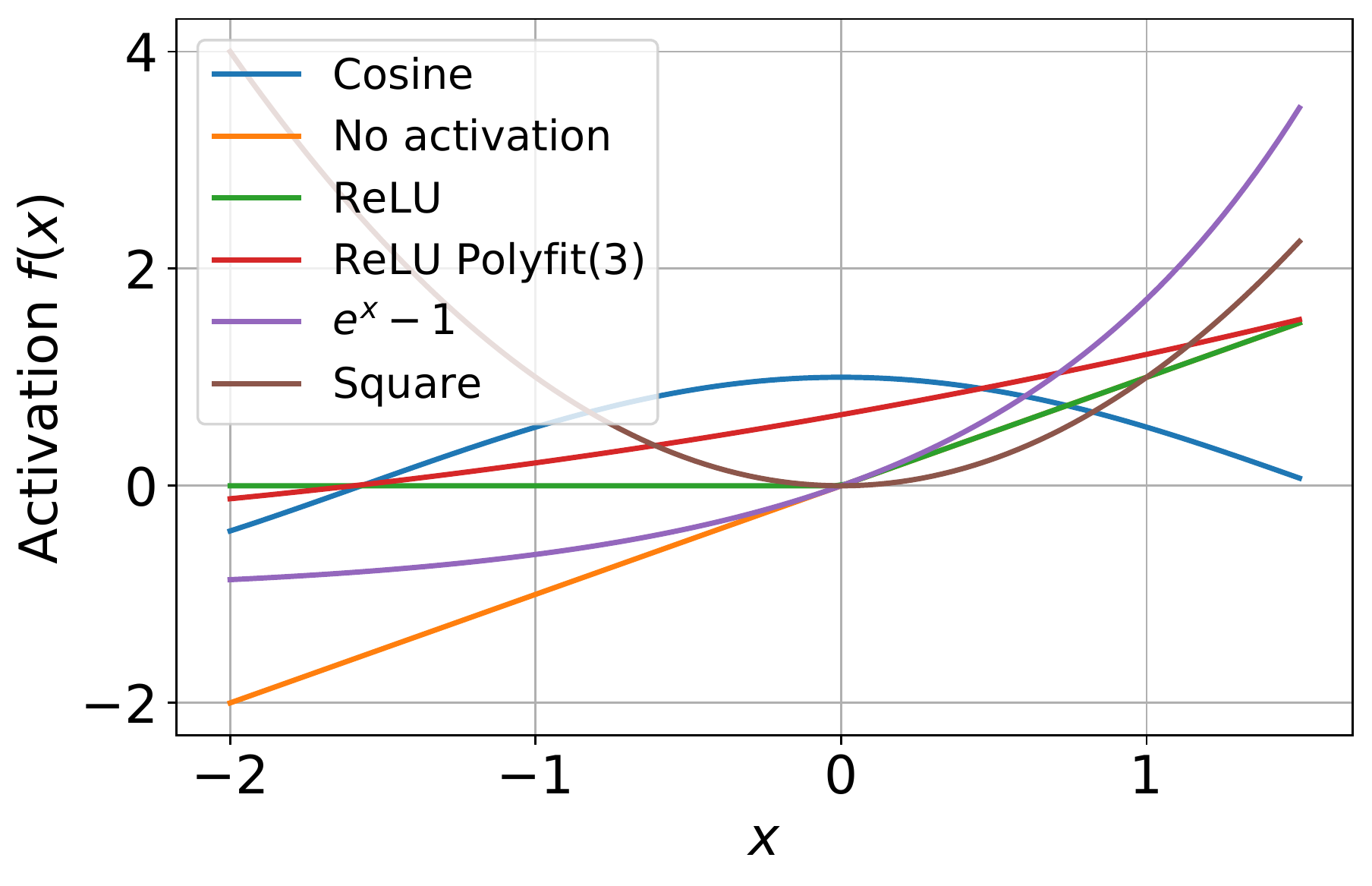}
}{%
  \caption{A comparison of different activation functions.}%
  \label{fig:activations}
}
\end{floatrow}
\end{figure}

The most widely used activation function in the non-MPC setup is the Rectified Linear Unit (ReLU) function: $\mathrm{ReLU}(x) = \max(0, x)$.
Unfortunately, it is not efficient to compute ReLU under the MPC setup.
To this end, \cite{wu2018ppolynets} and \cite{obla2020effective} proposed to approximate ReLU with low-degree polynomials. This still incurs large communication cost as it requires $d-1$ online online rounds of communications to compute $d$-degree polynomial approximation. Another simple alternative is the square function \citep{gilad2016cryptonets}. In this case, the trained neural network often has a subpar quality as noted by \cite{wu2018ppolynets}.

In this work, we propose to use cosine as the activation function.  We show that it has good theoretical properties (Section~\ref{sec:cos_theory}) and can be implemented efficiently under the 2PC setup (Section~\ref{sec:cos_algo}).  We compare different activation functions in Figure~\ref{fig:activations} and give the corresponding computation costs in Table~\ref{tab:activations}.

\subsection{Theoretical motivation of cosine}
\label{sec:cos_theory}
When using cosine as the activation function, the output of a neural network layer becomes $\cos( W x + b)$, where $x \in \mathbb{R}^{d}$ is the input,  $W \in \mathbb{R}^{k \times d}$ is the weight matrix, and $b \in \mathbb{R}^k$ is a bias vector. This form coincides with the Random Fourier Feature method \citep{rahimi2007random}, widely used in the kernel approximation literature.
Kernel method is a type of powerful nonlinear machine learning models that is computationally expensive to scale to large-scale datasets. Kernel approximation methods aim at  mapping the input into a new feature space, such that the dot product in that space approximate the value of the kernel. The benefit is that a linear classifier trained in the mapped space approximate a kernel classifier.

Let $\phi(x) = \cos(W x  + b)$,
the follow result in \cite{rahimi2007random} shows that $\phi(x) \cdot \phi(y)$ approximates a shift-invariant kernel with proper choice of the parameters in $W$ and $b$.
Note that since \citep{rahimi2007random}, there have been many improvements of Random Fourier Features such as better rate of approximation \citep{sriperumbudur2015optimal} and more efficient sampling \citep{yang2014quasi}. Cosine is also shown to be able to approximate other kernel types such as the polynomial kernels on unit sphere \citep{pennington2015spherical}.

\begin{lem}[Cosine approximates a shift-invariant kernel \citep{rahimi2007random}]
\label{lem:cos_rff}
Let $k\colon \mathbb{R}^d \times \mathbb{R}^d \to \mathbb{R}$ be a positive definite kernel.
Assume that $k$ is shift-invariant: there exists a function $K\colon \mathbb{R}^d \to \mathbb{R}$ such that for any $x,y \in \mathbb{R}^d$, $K(x-y) = k(x,y)$.
Let $\phi(x) = \sqrt{2 /D} [\cos(w_1 \cdot x + b_1), ..., \cos(w_D \cdot x  + b_D)]$,
where $w_1, \ldots, w_D$ are sampled i.i.d. from a distribution $p(w)$ which is the Fourier transformation of $K(z)$ i.e.,
$K(z) = \int_{\mathbb{R}^d} p(w) e^{j w \cdot z} dw. $
$b_1, ..., b_D$ are sampled uniformly from $[0, 2\pi]$.
Then $\phi(x) \cdot \phi(y)$ is an unbiased estimator of $K(x-y)$, and
\[
\pr\left[ \sup_{x, y \in \mathcal{M}} | \phi(x) \cdot  \phi(y) - k(x-y) | \geq \epsilon \right] \leq C \left( \frac{diam(\mathcal{M})}{{\epsilon} } \right)^2 e^{-D \epsilon^2 / d},
\]
where $\mathcal{M}$ is a compact subset of $\mathbb{R}^d$,
 $diam(\mathcal{M})$ is the diameter of $\mathcal{M}$, and $C$ is a constant.
\end{lem}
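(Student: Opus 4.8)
The plan is to prove the two claims separately: the unbiasedness of $\phi(x)\cdot\phi(y)$ is a one-line trigonometric computation, while the uniform deviation bound requires an $\epsilon$-net argument. For unbiasedness I would invoke Bochner's theorem, which guarantees that a continuous shift-invariant positive-definite kernel $K$ (normalized so $K(0)=1$) is the Fourier transform of a probability density $p(w)$ --- precisely the law from which the $w_i$ are drawn --- so that $K(x-y)=\mathbb{E}_w[e^{jw\cdot(x-y)}]=\mathbb{E}_w[\cos(w\cdot(x-y))]$, the last step using that $K$ is real. Writing $z_w(x)=\sqrt{2}\cos(w\cdot x+b)$ and applying $2\cos\alpha\cos\beta=\cos(\alpha-\beta)+\cos(\alpha+\beta)$ gives
\[
\mathbb{E}_{w,b}[z_w(x)z_w(y)]=\mathbb{E}_w[\cos(w\cdot(x-y))]+\mathbb{E}_{w,b}[\cos(w\cdot(x+y)+2b)].
\]
The second term vanishes because averaging $\cos(\cdot+2b)$ over $b\sim\mathrm{Unif}[0,2\pi]$ integrates a cosine over full periods, so $\mathbb{E}[z_w(x)z_w(y)]=K(x-y)$. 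Since $\phi(x)\cdot\phi(y)=\frac{2}{D}\sum_{i=1}^D\cos(w_i\cdot x+b_i)\cos(w_i\cdot y+b_i)$ is the average of $D$ such terms, unbiasedness follows.

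For the deviation bound, set $s(x,y)=\phi(x)\cdot\phi(y)$ and $f=s-K$, and observe it suffices to bound $f$ over the compact set of admissible arguments; following the standard treatment I would work over the difference set $\mathcal{M}_\Delta=\{x-y:x,y\in\mathcal{M}\}$, whose diameter is at most $2\,\mathrm{diam}(\mathcal{M})$. Cover $\mathcal{M}_\Delta$ by $N$ balls of radius $r$ centered at $\Delta_1,\dots,\Delta_N$, where $N\approx(\mathrm{diam}(\mathcal{M})/r)^d$.

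The argument then rests on two ingredients. First, pointwise concentration: at each center $f$ is an average of $D$ i.i.d. bounded, mean-zero terms, so Hoeffding's inequality gives $\pr[\,|f(\Delta_i)|\ge\epsilon/2\,]\le 2\exp(-cD\epsilon^2)$ for an absolute constant $c$, and a union bound over the net costs a factor $N$. Second, regularity: $f$ is Lipschitz with constant $L_f=\sup\|\nabla f\|$, and I would bound $\mathbb{E}[L_f^2]$ in terms of the spectral second moment $\sigma_p^2=\mathbb{E}_p\|w\|^2$ (finite because $K$ is smooth at the origin), then use Markov's inequality to ensure $L_f\le\epsilon/(2r)$ outside a small-probability event. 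If every net value satisfies $|f(\Delta_i)|<\epsilon/2$ and $L_f\le\epsilon/(2r)$, then $\sup_\Delta|f(\Delta)|<\epsilon$.

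Combining the two failure modes by a union bound yields a bound of the form $N\exp(-cD\epsilon^2)$ plus a Lipschitz term proportional to $(\sigma_p r/\epsilon)^2$. The final step is to choose the net radius $r$ to balance the growing covering number $N\approx(\mathrm{diam}(\mathcal{M})/r)^d$ against the shrinking Lipschitz-failure probability; the optimal $r$ collapses both contributions into the advertised form $C(\mathrm{diam}(\mathcal{M})/\epsilon)^2 e^{-D\epsilon^2/d}$, with the dimension $d$ entering through the exponent of the covering number. I expect the main obstacle to be exactly this last accounting: controlling the second moment of the random gradient, converting it into an anti-concentration statement for $L_f$, and tuning $r$ so that the polynomial covering factor and the sub-Gaussian tail merge cleanly and produce the correct dependence on $d$ in the exponent.
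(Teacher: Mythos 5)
The paper gives no proof of this lemma at all---it is imported, with loosened constants, from \cite{rahimi2007random}---and your proposal is a faithful reconstruction of that source's original argument: Bochner's theorem plus the product-to-sum identity (with the offset $b$ killing the $\cos(w\cdot(x+y)+2b)$ term) for unbiasedness, then an $\epsilon$-net over the difference set $\mathcal{M}_\Delta$, Hoeffding's inequality at the net centers, a Markov bound on the Lipschitz constant via the spectral second moment $\sigma_p^2 = \mathbb{E}_p\|w\|^2$, and optimization of the net radius to balance the covering number against the Lipschitz failure probability. The only discrepancy is in the lemma's paraphrased statement rather than in your argument: the optimization you describe yields the original exponent $-D\epsilon^2/(4(d+2))$ and a prefactor involving $\sigma_p^2$, whereas the paper writes the cleaner $-D\epsilon^2/d$ and silently absorbs the kernel-dependent moment into the unspecified constant $C$, so what you prove is the correct (original) form of the cited result rather than literally the displayed inequality.
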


The above lemma shows that, at random initialization, a linear classification model trained with the transformed features using cosine approximates a kernel-based classifier. In specific, if the weights $\{w_i\}_{i=1}^D$ are sampled from a Gaussian distribution, the model is approximating a Gaussian-kernel based classifier, already a strong baseline at initialization.

Besides the connection the kernel methods, cosine and its derivative are bounded functions. As shall be seen in our experiments, having a linearly-growing (e.g., ReLU) or a bounded activation (e.g., cos) means
that the model can be trained without carefully fine-tuning the optimization method. That is, models with the cosine activation can be
trained with a wide range of learning rates, in contrast to, for instance, the square function which grows quickly enough that a careful choice of learning rate is needed to prevent numerical overflow. Note that techniques like batchnorm that improves stability in neural network training are hard to be applied under the MPC setup due to the expensive operation of division. Hence, having a numerically stable activation function is very important.

\subsection{Algorithm to compute cosine activation function}
\label{sec:cos_algo}

\begin{algorithm}[t]
\begin{center}
\fbox{\begin{minipage}{0.9\textwidth}
\textbf{Input:} server$_1$ has $[x]_1$, and server$_2$ has $[x]_2$ such that $[x]_1 + [x]_2 = x$. \\
\textbf{Output:} server$_1$ has $[z]_1$, and server$_2$ has $[z]_2$ such that $[z]_1 + [z]_2 = \cos(x)$.
\begin{enumerate}[leftmargin=*]
    \item \textbf{Local computation:} Both servers compute $\cos[x]_i$ and $\sin[x]_i$ locally.
    \item \textbf{Exchange:}
    \begin{itemize}[leftmargin=*]
        \item From $\cos[x]_1$, server$_1$ constructs $[\cos[x]_1]_1$ and $[\cos[x]_1]_2$ using additive secret share mechanisms \citep{evans2017pragmatic}; sends $[\cos[x]_1]_2$ to server$_2$.
        \item From $\sin[x]_1$, server$_1$ constructs $[\sin[x]_1]_1$ and $[\sin[x]_1]_2$; sends $[\sin[x]_1]_2$ to server$_2$.
        \item From $\cos[x]_2$, server$_2$ constructs $[\cos[x]_2]_1$ and $[\cos[x]_2]_2$; sends $[\cos[x]_2]_1$ to server$_1$.
        \item From $\sin[x]_2$, server$_2$ constructs $[\sin[x]_2]_1$ and $[\sin[x]_2]_2$; sends $[\sin[x]_2]_1$ to server$_1$.
    \end{itemize}
    \item \textbf{Multiplication:}
        \begin{align*}
             [\cos[x]_1\cos[x]_2]_1, [\cos[x]_1\cos[x]_2]_2 &= \textrm{Mult}( ([\cos[x]_1]_1, [\cos[x]_2]_1),
             ([\cos[x]_1]_2, [\cos[x]_2]_2)). \\
        [\sin[x]_1\sin[x]_2]_1, [\sin[x]_1\sin[x]_2]_2 &= \textrm{Mult}(([\sin[x]_1]_1, [\sin[x]_2]_1),
         ([\sin[x]_1]_2, [\sin[x]_2]_2)).
  \end{align*}
\item \textbf{Final computation:} \\
server$_1$ computes $[z]_1 = [\cos[x]_1\cos[x]_2]_1 - [\sin[x]_1\sin[x]_2]_1$. \\
server$_2$ computes $[z]_2 = [\cos[x]_1\cos[x]_2]_2 - [\sin[x]_1\sin[x]_2]_2$.
\end{enumerate}
\end{minipage}}
\end{center}
\caption{Securely compute cosine activation in the 2PC setup. Mult() refers to the known secure multiplication protocol based on Beaver Triplets \citep{beaver1991efficient}.}
\label{alg:cos}
\end{algorithm}

In this section, we provide an algorithm to compute cosine activation function in the 2PC setup. The algorithm is given in Figure~\ref{alg:cos}. We state the algorithm in the bracket notation, where $[y]_i$ denote the shard of $y$ located in server$_i$. Therefore $[y]_1 + [y_2] = y$. The algorithm has two main parts: locally the algorithms first compute $\cos [x]_i$ and $\sin[x]_i$. Then they use known 2PC protocols for addition, multiplication and secure exchange to compute $[z]_1$ and $[z]_2$ such that $[z]_1 + [z]_2 = z$ and
\[
z = \cos [x]_1 \cos[x_2] - \sin [x]_1 \sin[x_2].
\]
By the trignometric identity $\cos(a+b) = \cos(a) \cos(b) - \sin(a) \sin(b)$,
\[
z = \cos([x]_1 + [x]_2) = \cos(x).
\]

\begin{thm}
\label{thm:cos_cost}
Cosine can be computed in the 2PC setup with two online rounds of communication.
\end{thm}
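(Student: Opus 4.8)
The plan is to establish the claim in two parts: first confirm that Algorithm~\ref{alg:cos} correctly produces additive shares of $\cos(x)$, and then account for the online communication rounds step by step. Correctness is immediate from the discussion preceding the theorem: the final shares satisfy $[z]_1 + [z]_2 = \cos[x]_1\cos[x]_2 - \sin[x]_1\sin[x]_2$, and the angle-addition identity $\cos(a+b) = \cos(a)\cos(b) - \sin(a)\sin(b)$ with $a = [x]_1$ and $b = [x]_2$ gives $[z]_1 + [z]_2 = \cos([x]_1 + [x]_2) = \cos(x)$. The substantive content of the theorem is therefore the round count, which I would obtain by walking through the four steps of the algorithm.

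First I would note that the local computation step (step 1) and the final subtraction step (step 4) each operate entirely on data already held by a single server, so neither requires any online communication. Next, in the exchange step (step 2), each server produces additive secret shares of its four locally computed quantities; generating such shares from a value held in plaintext is a local operation (sample a random mask and set the two shares accordingly), so the only communication is the transmission of one share of each quantity to the other server. Because these transmissions are mutually independent, both servers can issue their sends simultaneously, and the entire exchange collapses into a single online round.

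The multiplication step (step 3) is where I would be most careful. Using the Beaver-triplet protocol, the triplets themselves are input-independent and can be generated in the offline phase, so they fall off the critical path and contribute no online rounds. The online phase of a single Beaver multiplication consists of each party opening its masked operands, which costs one round. The two products required here, $\cos[x]_1\cos[x]_2$ and $\sin[x]_1\sin[x]_2$, are independent, so their masked-value openings can be batched into the same exchange. Hence step 3 costs exactly one online round. Summing the contributions $0 + 1 + 1 + 0$ across the four steps yields the claimed total of two online rounds.

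The main obstacle is not any deep calculation but rather the careful bookkeeping of what counts as an online round: I would need to justify that the four sends in the exchange step can share one round rather than being serialized, that the two Beaver multiplications can be batched into a single round, and that the triplet generation is legitimately relegated to the offline phase. I would also confirm that the exchange round cannot be folded into the multiplication round, since Beaver multiplication presupposes that its inputs are already shared across both servers. Once these parallelization and offline/online-placement arguments are in place, the count follows directly.
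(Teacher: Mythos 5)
Your proposal is correct and follows essentially the same route as the paper's proof: verify correctness via the angle-addition identity (which the paper handles in the discussion preceding the theorem) and then count one online round for the secure exchange plus one for the Beaver-triplet multiplications, with local steps contributing nothing. Your accounting is in fact more careful than the paper's --- you explicitly justify that the four sends in step 2 parallelize into one round, that the two Beaver multiplications batch into one round, and that triplet generation is legitimately offline --- all of which the paper leaves implicit.
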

\begin{proof}
Observe that in Algorithm~\ref{alg:cos} the overall computation only requires known 2PC protocols for addition, secure exchange, and multiplication. Addition does not require any online rounds.
It is known that multiplication can be implemented using Beaver Triplets \cite{beaver1991efficient} in one online round.
Finally secure exchange requires one online round of communication. Hence the total number of online rounds of communication is two.
\end{proof}

\section{Fast linear transformation using the Hadamard-Diagonal matrices}
\label{sec:hadamard}

One elementary computation in neural network is linear transformation: $y = W x,$
where $x \in \mathbb{R}^d, W \in \mathbb{R}^{k \times d}, y \in \mathbb{R}^k$.
This operation is expensive due to the $\mathcal{O}(dk)$ multiplications involved. In the MPC setup, both $W$ and $x$ are encrypted and thus computing their product requires bandwidth proportional to $\mathcal{O}(d k)$, which can be prohibitive.

In this paper we propose to impose specific types of structure on the $W$ matrix. For simplicity, let us assume $k = d$.\footnote{$k < d$ can be handled with picking the first $k$ elements of the output. $x$ can be padded with 0s such that $d$ is a power of 2.} We propose using
$W = H D, $
where $D \in \mathbb{R}^{d \times d}$ is a diagonal matrices with learnable weights, and $H$ is the normalized Walsh-Hadamard matrix of order $d$ with the following recursive definition:
\[
H_1 = [1], \quad
H_2 =  \frac{1} {\sqrt{2}}
\begin{bmatrix}
1 & 1 \\
1 & -1
\end{bmatrix}, \quad
H_{2^k} =  \frac{1}{2^{k/2}}
\begin{bmatrix}
H_{2^{k-1}} & H_{2^{k-1}} \\
H_{2^{k-1}} & -H_{2^{k-1}}
\end{bmatrix}.
\]

\subsection{Theoretical motivations}

Speeding up linear transformations $y = Wx$ has been an extensively studied topic under different applications.
For example,
in dimensionality reduction ($k < d$), when the element of $W$ are sampled iid from a Gaussian distribution, the well-known Johnson-Lindenstrauss lemma states that the l2 distance of the original space is approximately preserved in the new space.  Here $W$ can be replaced by structured matrices such as the fast Johnson-Lindenstrauss transformation (a sparse matrix, a Walsh-Hadamard matrix, and a random binary matrix) \citep{ailon2006approximate}, sparse matrices \citep{matouvsek2008variants} and circulant matrices \citep{hinrichs2011johnson}. Such matrices give faster computation with similar error bound.
Another example is binary embedding $y = \text{sign} (Wx)$. When elements of $W$ are sampled iid from the standard Gaussian distribution, the Hamming distance of the new space approximates the angle of the original space.
Similarly, this operation can be made faster with structured matrices such as circulant \citep{yu2018binary} and bilinear matrices \citep{gong2013learning}.
Several types of structures have also been applied in speeding up the fully connected layers in neural networks. Examples are circulant \citep{cheng2015exploration}, fastfood \citep{yang2015deep}, low-displacement rank \citep{thomas2018learning} matrices.

The reason why we chose to use the Hadamard-Diagonal structure goes back to the literature of kernel approximation. It was shown that by using such a structure, the mapping $y = \cos(Wx)$ provides even lower approximation error of a shift-invariant kernel \citep{yu2016orthogonal} in comparison with the random unstructured matrices. Notice that other types of structures such as ``fastfood'' \citep{le2013fastfood} and circulant do not have such good properties. By pairing cosine with the Hadamard-Diagonal structure,  even at initialization, the feature space already well mimic that of a kernel. Furthermore, by optimizing the weights of the diagonal matrices, we are implicitly learning a good shift invariant kernel for the task.
In Section \ref{sec:exp} we show that this structure almost does not hurt the quality of the model while reducing the computational cost. We also show in the appendix that cosine is indeed performing better than  circulate and fastfood.

\subsection{Cost of computing the Hadamard-Diagonal transformation}

There are two operations in computing the Hadamard-Diagonal transformation.
\begin{enumerate}
    \item Multiplication with a variable diagonal matrix. This involves $\mathcal{O}(d)$ secure multiplications.
    \item Multiplication with a fixed Hadamard matrix. Since the Hadamard matrix is a linear transformation that is fixed and publicly known, multiplying a vector with Hadamard matrix requires $\mathcal{O}(d \log d)$ sums and subtractions on each server locally and does not require any communication between the servers.
\end{enumerate}
Thus computing $HDx$ for a vector $x$, requires  $\mathcal{O}(d \log d)$ sums and subtractions locally and $\mathcal{O}(d)$ bits of communication. This is significantly smaller than the traditional matrix vector multiplication, which requires $\mathcal{O}(d^2)$ bits of communication between the two servers.
Note that one can simply  improve the model capacity by using multiple $HD$ blocks such as $y = HD_3 HD_2 HD_1 x $.  We empirically observe that more blocks only provides marginal quality improvements, yet much high computational cost. So in this work, we advocate using a single $HD$ block.

\section{Experiments}
\label{sec:exp}
In this section, we study how modeling choices affect the model prediction accuracy. These choices include the optimization method, activation function, number of layers, and structure of weight matrices.
For all the experiments, we use SGD as the optimizer as it is known that implementing optimizers which compute moments such as Adam incurs significantly more overhead in MPC setup.
For all the datasets, we sweep learning rates in the set \{1e{-5}, 1e{-4}, 1e{-3}, 0.01, 0.1, 0.5\}, and show the best averaged result over five trials.
We note that our goal here is not to produce the state-of-the-art results on datasets we consider. In fact under the MPC setup, it is not practical  to  use many more advanced architectures (e.g. attention), optimizes (e.g. Adam) and techniques (e.g. batchnorm).
Rather, we aim to investigate the predictive performance of simple models that are suitable for the MPC setup. More specifically, we provide empirical evidence for the following  important findings:
\begin{enumerate}
    \item Cosine activation performs better than other existing activation functions in the 2PC setup. Furthermore, the performance is significantly superior if the network has many layers.
    \item Replacing dense matrices with Hadamard-diagonal matrices incurs only a small performance degradation.
\end{enumerate}
We report additional results in the appendix. There, we show that cosine activation is more stable to train compared to square activation and hence is preferred in the MPC setup.
While the use of an adaptive optimization procedure like Adam \citep{kingma2014adam} is limited in the MPC setting, for completeness, we report our results with Adam in the appendix, where we show that cosine still performs better than other existing activation functions.
In the main text, we will focus on SGD as our optimization method of choice.
For all experimental results reported, the standard deviation computed across trials is in the order of 0.1 or smaller i.e., the only factor of variability is the initialization. We omit the standard deviation  for brevity.

\subsection{Datasets and Models}
\label{sec:datasets_models}

\begin{wrapfigure}{r}{0.44\textwidth}
\includegraphics[width=0.98\linewidth]{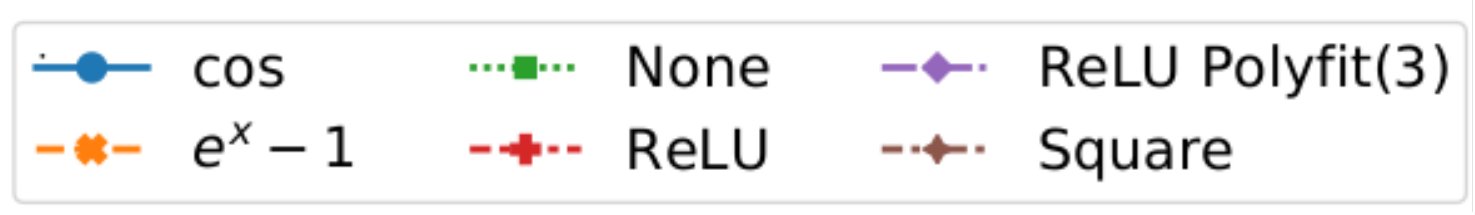}
\includegraphics[width=0.98\linewidth]{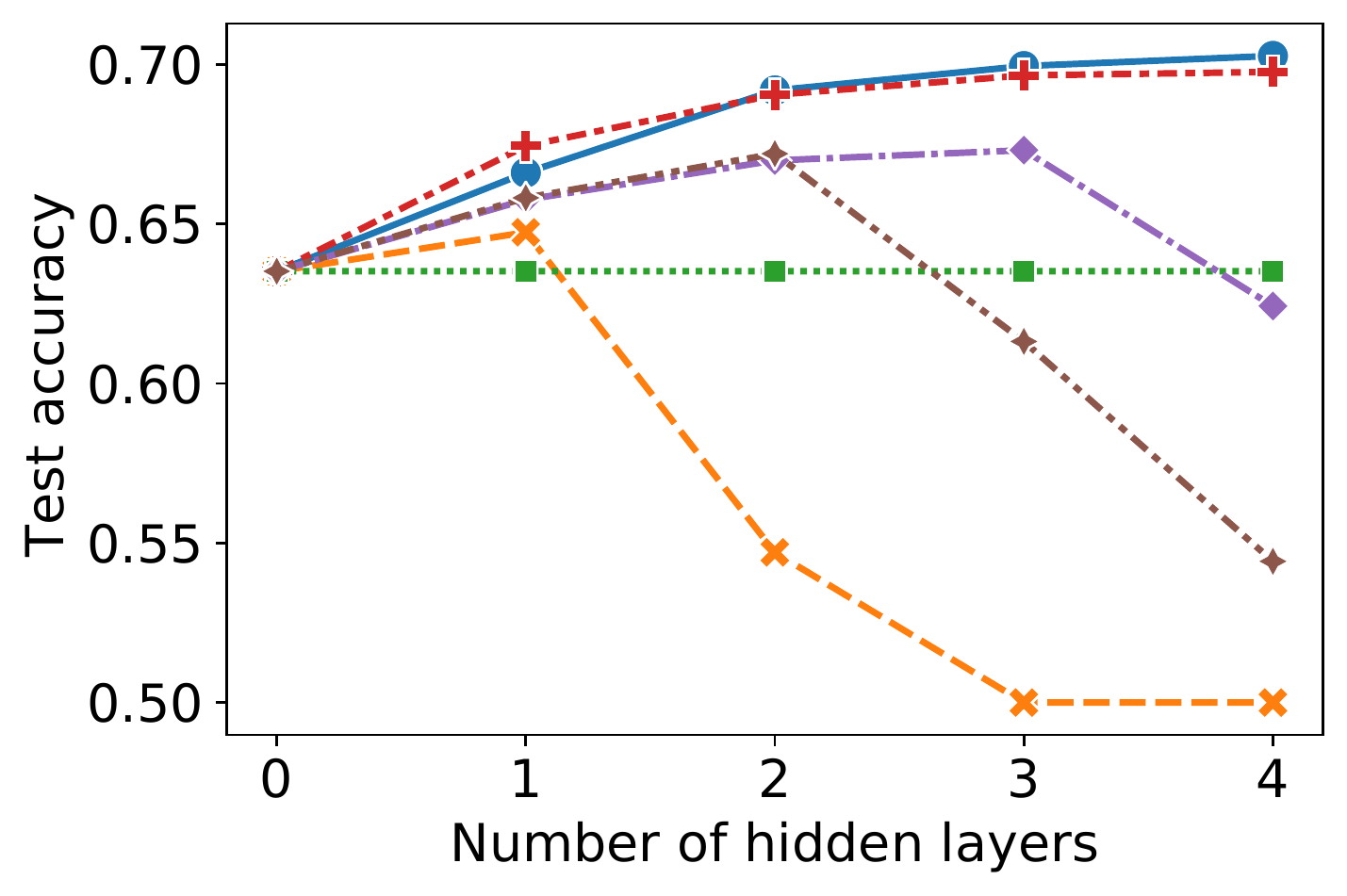}
\caption{Test accuracy as a function of the number of hidden layers
in the Higgs problem. We use the proposed Hadamard-Diagonal layer with each activation function listed.}
\label{fig:higgs_layers}
\vspace*{-2mm}
\end{wrapfigure}
\textbf{Datasets}
We consider four  classification datasets: MNIST, Fashion-MNIST, Higgs, and Criteo. These data cover challenging problems from computer vision, physics and online advertising domains, respectively. Briefly the MNIST and Fashion-MNIST datasets are 10-class classification problems where each example is an image of size $28 \times 28$ pixels.
The Higgs dataset\footnote{Higgs dataset is available at \url{https://www.tensorflow.org/datasets/catalog/higgs}.}  represents a problem of detecting the signatures of Higgs Boson from background noise \citep{Baldi:2014kfa}. This is a binary classification problem  where each record consists of 28 floating-point features representing kinematic properties. We randomly subsample 7.7M records (70\%) from the available 11M examples, and use them for training. The remaining 3.3M records are used for testing.
The Criteo dataset is a challenging ad click-through rate prediction problem, and was used as part of a public Kaggle competition.\footnote{The Criteo dataset was used as  a prediction problem as part of the Display Advertising Challenge in 2014 (a Kaggle competition) by Criteo (\url{https://www.kaggle.com/c/criteo-display-ad-challenge}).}
There are 36 million records in total with 25.6\% labeled as clicked. We preprocess the Criteo data following the setup of
\cite{wideanddeep_2017}.
More technical details regarding these datasets and propcessing steps can be found in Section \ref{sec:datasets} in the appendix.

\textbf{Models}
For MNIST and Fashion-MNIST, we use LeNet-5 \citep{lecun1998gradient} as the model, where the activation functions in the last two hidden layers are replaced. For Higgs data, we use a Multi-Layer Perceptron (MLP) model with four hidden layers, each with 16 units. A discussion on the number of layers and more results are presented in the appendix to show training stability of different activation functions. For Criteo, we follow \cite{wideanddeep_2017} and use an MLP with four hidden layers, with 1024 hidden units in each.
In all cases, we consider three variants for each dense layer in the model: 1) the standard dense (fully connected) layer, 2) the fast Hadamard-Diagonal layer as proposed in Section \ref{sec:hadamard}, and 3) a  dense layer where the weight matrix has rank at most two i.e., the weight matrix $W = V_1^\top V_2$ where both matrices $V_1$ and $V_2$ have two rows.

\subsection{Activation Functions}
\label{sec:activations}

\begin{table}[t]
\vspace{-4mm}
    \caption{Summary of results on all the datasets using the standard unstructured dense layer. We report test accuracy except Criteo where we report AUC-PR due to its skewed label distribution. Even though ReLU cannot be computed in the MPC setup, we state its performance for reference.}
        \centering
    \begin{tabular}{c c c c c}
    \midrule
       Activation  &  MNIST & Fashion-MNIST & Higgs & Criteo  \\
       \midrule
        Cosine &  \textbf{99.1}    & \textbf{88.0} & \textbf{74.4} & 57.8 \\
        $e^x - 1$ &  98.0 & 86.6 & 50.0 & 57.1 \\
        None  &  92.2     & 83.6 & 63.5 & 57.0 \\
        ReLU Polyfit(3)  & 99.0 & 87.4 & 64.8 & 57.5 \\
        Square & 80.9     & 61.8 & 50.0 & \textbf{59.0} \\ \midrule
             ReLU (non MPC baseline) &  {99.1}     & {90.0} & 74.2 & {59.5} \\
        \midrule
    \end{tabular}
    \label{tab:all_dense}
    \vspace{-3mm}
\end{table}

Table~\ref{tab:all_dense} summarizes our results for different activation functions. Recall that for each activation function, we report the best trial-averaged performance across a number of learning rate candidates (trained with SGD).
Here we use the standard dense layer, as opposed to a structured weight matrix.
We report the test accuracy for all datasets, except for Criteo where we
instead report AUC-PR which is more appropriate for its skewed label distribution.
We observe that our proposed cosine activation function performs as well as
the standard activation function like ReLU, while being more computationally
tractable for the 2PC setup. To understand other activation functions, consider the example of Higgs dataset. In this dataset, using no activation function yields
an accuracy of 63.5\%, whereas $e^x-1$ and Square only performance at the chance level since theses models struggle to train without a numerical issue. Specifically, a closer inspection reveals that these activation functions grow fast, and require a careful choice of the optimization method, and learning rate to prevent numerical overflow. By contrast, cosine is bounded and is more robust to a sub-optimal choice of the optimization algorithm (more on this point in the appendix).

We also investigated the performance of these activations as a function of number of layers for the Higgs dataset. The results are given in Figure~\ref{fig:higgs_layers} where the proposed Hadamard structure is used for all activation functions. While most activation functions perform well with fewer hidden layers, only cosine and ReLU performs well as the number of layers increases. This is because these functions grow at most linearly as their derivatives are bounded by one, and do not require a careful tuning of the learning rate to prevent numerical overflow.
We note that the same observation holds true when we use the standard dense layer, and a low rank weight matrix. We omit these results for brevity. More discussion on this aspect can be found in Section \ref{sec:depth_stable} in the appendix.

\subsection{Structured matrices}
\label{sec:structured_mat}
Next we investigate the effect of using a structured weight matrix in each dense layer. To this end, we consider the cosine activation function for all datasets, and consider replacing each dense layer with one of the three variants described in Section \ref{sec:datasets_models}: 1) the standard dense layer (Dense), 2) the proposed Hadamard-Diagonal layer (Hadamard) as described in Section \ref{sec:hadamard}, and 3) dense layer with a weight matrix of at most rank two (Low rank).
We present test accuracy and AUC-PR in Table~\ref{tab:structured}.
It can be seen that using the Hadamard weight matrix incurs a minor loss of performance across different datasets, compared to the standard dense layer. Recall that the number of parameters in the proposed Hadamard-Diagonal layer is only linear in the number of inputs, as opposed to being quadratic as in the standard dense layer. Interestingly the performance of the low-rank weight matrix is lower than that of the Hadamard weight matrix, even though
it has four times as many learnable parameters. This hints that having an
orthogonal weight matrix helps increase expressiveness of the model, presumably because it forms a basis in the latent feature space of the network.

\begin{table}[ht]
\vspace{-2mm}
    \centering
        \caption{Test accuracy on the four problems with cosine activation function.  For Criteo where the label distribution is skewed, we report AUC-PR.}

    \begin{tabular}{c c c c c c}
    \midrule
       Structure  & \# secure mult. &  MNIST & Fashion-MNIST & Higgs & Criteo  \\
       \midrule
        Dense   &  $d^2$ & 99.1 & 88.0 & 74.4 & 57.8 \\
    Low rank  &  $4d$ & 94.4 & 84.9 & 67.9 & 25.6 \\
          Hadamard  &  $d$ & 98.5 & 88.9 & 70.3 & 58.0 \\

        \midrule
    \end{tabular}
    \label{tab:structured}
\end{table}

In fact, we observe that the superiority of the Hadamard-Diagonal layer to the low-rank weight matrix is not specific to cosine.
We provide evidence in Table~\ref{tab:combined-fashion} where we report performance on Fashion-MNIST and Criteo for each combination of activation function and type of weight structure.
Note that for Criteo, since the two classes are skewed with roughly 25\% of the data belonging to the positive class, we report AUC-PR instead of the test accuracy.
We observe that in most cases our proposed Hadamard structure yields models with higher performance than the low rank approach does for all activation functions, while being competitive to the standard dense layer (which has an order of magnitude more parameters).
We include comparisons to more baselines in Table~\ref{tab:hadamard_vs_phd_and_circulant} in the Appendix: PHD \citep{yang2015deep} and Circulant matrix \citep{cheng2015exploration}.
There too we find Hadamard tranform to be the best performing approach.

\begin{table}[ht]
    \centering
    \caption{Test accuracy of the LeNet-5 model \citep{lecun1998gradient} on Fashion-MNIST for each combination of activation function and type of the
    weight matrix. For Criteo, we use a standard feedforward
    MLP with four hidden layers, each containing 1024 hidden units.
    Our proposed Hadamard structure works better than the low rank approach in most cases, for all activation functions, while being competitive to the standard dense layer (where no structure is imposed).
}
  \begin{tabular}{ r c c c | c c c }
  & \multicolumn{3}{c}{Fashion-MNIST} & \multicolumn{3}{c}{Criteo} \\
 Activation & Dense & Low rank & Hadamard   & Dense & Low rank & Hadamard  \\
 \midrule
Cosine & 88.0 & 84.9 & \textbf{88.9} &  57.8 & 25.6 & \textbf{58.0}   \\
$e^x-1$ & \textbf{86.6} & 54.2 & 85.9 &  57.1 & 57.4 & 57.4 \\
None & \textbf{83.6} & 73.8 & \textbf{83.6} &  57.0 & 56.8 & 48.5  \\
ReLU Polyfit(3) & 87.4 & 83.8 & \textbf{88.7} &  57.5 & 25.6 & 52.7  \\
Square & \textbf{61.8} & 35.0 & 35.7&  \textbf{59.0} & 25.6 & 25.6 \\
\midrule
ReLU (non MPC baseline) & \textbf{90.0} & 86.3 & 89.2 &  59.5 & 57.7 & 46.6  \\
\midrule
\end{tabular}

    \label{tab:combined-fashion}
\end{table}

\section{Discussion and future work}
The HD-cos approach offers a generic, computationally efficient building block for training and inference of a neural network in the multi-party computation setup.  The efficiency gain is a result of two novel ideas: 1) use a fast Hadamard transform in place of the standard dense layer to reduce the number of parameters from $d^2$ to $d$ parameters where $d$ is the input dimension; 2) use cosine as the activation function. As we demonstrated, the smoothness and boundedness of its derivative helps ensure robust training even when the learning rate is not precisely chosen. Both of these properties are of obvious value; yet a natural question remains: can we reduce the computational cost even further?  A potential idea is to combine HD-cos with Johnson–Lindenstrauss random projection, which reduces the input dimensionality (without learnable parameters) while preserving distances. Investigating this would be an interesting topic for future research.

\bibliography{main2}

\begin{thebibliography}{42}
\providecommand{\natexlab}[1]{#1}
\providecommand{\url}[1]{\texttt{#1}}
\expandafter\ifx\csname urlstyle\endcsname\relax
  \providecommand{\doi}[1]{doi: #1}\else
  \providecommand{\doi}{doi: \begingroup \urlstyle{rm}\Url}\fi

\bibitem[Ailon \& Chazelle(2006)Ailon and Chazelle]{ailon2006approximate}
Nir Ailon and Bernard Chazelle.
\newblock Approximate nearest neighbors and the fast {Johnson-Lindenstrauss}
  transform.
\newblock In \emph{Proceedings of the thirty-eighth annual ACM symposium on
  Theory of computing}, pp.\  557--563, 2006.

\bibitem[Aslett et~al.(2015)Aslett, Esperan{\c{c}}a, and
  Holmes]{aslett2015review}
Louis~JM Aslett, Pedro~M Esperan{\c{c}}a, and Chris~C Holmes.
\newblock A review of homomorphic encryption and software tools for encrypted
  statistical machine learning.
\newblock \emph{arXiv preprint arXiv:1508.06574}, 2015.

\bibitem[Baldi et~al.(2014)Baldi, Sadowski, and Whiteson]{Baldi:2014kfa}
Pierre Baldi, Peter Sadowski, and Daniel Whiteson.
\newblock {Searching for exotic particles in high-energy physics with Deep
  Learning}.
\newblock \emph{Nature Communications}, 5:\penalty0 4308, 2014.
\newblock \doi{10.1038/ncomms5308}.

\bibitem[Barni et~al.(2006)Barni, Orlandi, and Piva]{barni2006privacy}
Mauro Barni, Claudio Orlandi, and Alessandro Piva.
\newblock A privacy-preserving protocol for neural-network-based computation.
\newblock In \emph{Proceedings of the 8th workshop on Multimedia and security},
  pp.\  146--151, 2006.

\bibitem[Beaver(1991)]{beaver1991efficient}
Donald Beaver.
\newblock Efficient multiparty protocols using circuit randomization.
\newblock In \emph{Annual International Cryptology Conference}, pp.\  420--432.
  Springer, 1991.

\bibitem[Chabanne et~al.(2017)Chabanne, de~Wargny, Milgram, Morel, and
  Prouff]{chabanne2017privacy}
Herv{\'e} Chabanne, Amaury de~Wargny, Jonathan Milgram, Constance Morel, and
  Emmanuel Prouff.
\newblock Privacy-preserving classification on deep neural network.
\newblock \emph{International Association for Cryptologic Research. ePrint
  Archive.}, 2017:\penalty0 35, 2017.

\bibitem[Cheng et~al.(2015)Cheng, Yu, Feris, Kumar, Choudhary, and
  Chang]{cheng2015exploration}
Yu~Cheng, Felix~X. Yu, Rogerio~S Feris, Sanjiv Kumar, Alok Choudhary, and
  Shih-Fu Chang.
\newblock An exploration of parameter redundancy in deep networks with
  circulant projections.
\newblock In \emph{Proceedings of the IEEE international conference on computer
  vision}, pp.\  2857--2865, 2015.

\bibitem[Chwialkowski et~al.(2015)Chwialkowski, Ramdas, Sejdinovic, and
  Gretton]{chwialkowski2015fast}
Kacper~P Chwialkowski, Aaditya Ramdas, Dino Sejdinovic, and Arthur Gretton.
\newblock Fast two-sample testing with analytic representations of probability
  measures.
\newblock \emph{Advances in Neural Information Processing Systems},
  28:\penalty0 1981--1989, 2015.

\bibitem[Dwork et~al.(2014)Dwork, Roth, et~al.]{dwork2014algorithmic}
Cynthia Dwork, Aaron Roth, et~al.
\newblock The algorithmic foundations of differential privacy.
\newblock \emph{Foundations and Trends in Theoretical Computer Science},
  9\penalty0 (3-4):\penalty0 211--407, 2014.

\bibitem[Evans et~al.(2017)Evans, Kolesnikov, and Rosulek]{evans2017pragmatic}
David Evans, Vladimir Kolesnikov, and Mike Rosulek.
\newblock A pragmatic introduction to secure multi-party computation.
\newblock \emph{Foundations and Trends{\textregistered} in Privacy and
  Security}, 2\penalty0 (2-3), 2017.

\bibitem[Gilad-Bachrach et~al.(2016)Gilad-Bachrach, Dowlin, Laine, Lauter,
  Naehrig, and Wernsing]{gilad2016cryptonets}
Ran Gilad-Bachrach, Nathan Dowlin, Kim Laine, Kristin Lauter, Michael Naehrig,
  and John Wernsing.
\newblock Cryptonets: Applying neural networks to encrypted data with high
  throughput and accuracy.
\newblock In \emph{International conference on machine learning}, pp.\
  201--210. PMLR, 2016.

\bibitem[Gong et~al.(2013)Gong, Kumar, Rowley, and Lazebnik]{gong2013learning}
Yunchao Gong, Sanjiv Kumar, Henry~A Rowley, and Svetlana Lazebnik.
\newblock Learning binary codes for high-dimensional data using bilinear
  projections.
\newblock In \emph{Proceedings of the IEEE conference on computer vision and
  pattern recognition}, pp.\  484--491, 2013.

\bibitem[Graepel et~al.(2012)Graepel, Lauter, and Naehrig]{graepel2012ml}
Thore Graepel, Kristin Lauter, and Michael Naehrig.
\newblock {ML} confidential: Machine learning on encrypted data.
\newblock In \emph{International Conference on Information Security and
  Cryptology}, pp.\  1--21. Springer, 2012.

\bibitem[Hinrichs \& Vyb{\'\i}ral(2011)Hinrichs and
  Vyb{\'\i}ral]{hinrichs2011johnson}
Aicke Hinrichs and Jan Vyb{\'\i}ral.
\newblock {Johnson-Lindenstrauss} lemma for circulant matrices.
\newblock \emph{Random Structures \& Algorithms}, 39\penalty0 (3):\penalty0
  391--398, 2011.

\bibitem[Jayaraman \& Wang(2018)Jayaraman and Wang]{jayaraman2018distributed}
Bargav Jayaraman and Lingxiao Wang.
\newblock Distributed learning without distress: Privacy-preserving empirical
  risk minimization.
\newblock \emph{Advances in Neural Information Processing Systems}, 2018.

\bibitem[Kelkar et~al.(2021)Kelkar, Le, Raykova, and Seth]{kelkar2021secure}
Mahimna Kelkar, Phi~Hung Le, Mariana Raykova, and Karn Seth.
\newblock Secure {Poisson} regression.
\newblock \emph{International Association for Cryptologic Researc. ePrint
  Archive.}, 2021:\penalty0 208, 2021.

\bibitem[Kingma \& Ba(2014)Kingma and Ba]{kingma2014adam}
Diederik~P Kingma and Jimmy Ba.
\newblock Adam: A method for stochastic optimization.
\newblock \emph{arXiv preprint arXiv:1412.6980}, 2014.

\bibitem[Knott et~al.(2021)Knott, Venkataraman, Hannun, Sengupta, Ibrahim, and
  van~der Maaten]{knott2021crypten}
Brian Knott, Shobha Venkataraman, Awni Hannun, Shubho Sengupta, Mark Ibrahim,
  and Laurens van~der Maaten.
\newblock Crypten: Secure multi-party computation meets machine learning.
\newblock \emph{arXiv preprint arXiv:2109.00984}, 2021.

\bibitem[Le et~al.(2013)Le, Sarl{\'o}s, Smola, et~al.]{le2013fastfood}
Quoc Le, Tam{\'a}s Sarl{\'o}s, Alex Smola, et~al.
\newblock Fastfood: approximating kernel expansions in loglinear time.
\newblock In \emph{Proceedings of the international conference on machine
  learning}, volume~85, 2013.

\bibitem[LeCun et~al.(1998)LeCun, Bottou, Bengio, and
  Haffner]{lecun1998gradient}
Yann LeCun, L{\'e}on Bottou, Yoshua Bengio, and Patrick Haffner.
\newblock Gradient-based learning applied to document recognition.
\newblock \emph{Proceedings of the IEEE}, 86\penalty0 (11):\penalty0
  2278--2324, 1998.

\bibitem[LeCun et~al.(2010)LeCun, Cortes, and Burges]{lecun2010mnist}
Yann LeCun, Corinna Cortes, and CJ~Burges.
\newblock {MNIST} handwritten digit database.
\newblock \emph{ATT Labs [Online]. Available:
  http://yann.lecun.com/exdb/mnist}, 2, 2010.

\bibitem[Liu et~al.(2020)Liu, Xu, Lu, Zhang, Gretton, and
  Sutherland]{liu2020learning}
Feng Liu, Wenkai Xu, Jie Lu, Guangquan Zhang, Arthur Gretton, and Danica~J
  Sutherland.
\newblock Learning deep kernels for non-parametric two-sample tests.
\newblock In \emph{International Conference on Machine Learning}, pp.\
  6316--6326. PMLR, 2020.

\bibitem[Liu et~al.(2017)Liu, Juuti, Lu, and Asokan]{liu2017oblivious}
Jian Liu, Mika Juuti, Yao Lu, and Nadarajah Asokan.
\newblock Oblivious neural network predictions via {MiniONN} transformations.
\newblock In \emph{Proceedings of the 2017 ACM SIGSAC Conference on Computer
  and Communications Security}, pp.\  619--631, 2017.

\bibitem[Matou{\v{s}}ek(2008)]{matouvsek2008variants}
Ji{\v{r}}{\'\i} Matou{\v{s}}ek.
\newblock On variants of the {Johnson--Lindenstrauss} lemma.
\newblock \emph{Random Structures \& Algorithms}, 33\penalty0 (2):\penalty0
  142--156, 2008.

\bibitem[Mohassel \& Zhang(2017)Mohassel and Zhang]{mohassel2017secureml}
Payman Mohassel and Yupeng Zhang.
\newblock Secureml: A system for scalable privacy-preserving machine learning.
\newblock In \emph{2017 IEEE symposium on security and privacy (SP)}, pp.\
  19--38. IEEE, 2017.

\bibitem[Noel et~al.(2021)Noel, Trivedi, Dutta, et~al.]{noel2021growing}
Mathew~Mithra Noel, Advait Trivedi, Praneet Dutta, et~al.
\newblock Growing cosine unit: A novel oscillatory activation function that can
  speedup training and reduce parameters in convolutional neural networks.
\newblock \emph{arXiv preprint arXiv:2108.12943}, 2021.

\bibitem[Obla et~al.(2020)Obla, Gong, Aloufi, Hu, and
  Takabi]{obla2020effective}
Srinath Obla, Xinghan Gong, Asma Aloufi, Peizhao Hu, and Daniel Takabi.
\newblock Effective activation functions for homomorphic evaluation of deep
  neural networks.
\newblock \emph{IEEE Access}, 8:\penalty0 153098--153112, 2020.

\bibitem[Orlandi et~al.(2007)Orlandi, Piva, and Barni]{orlandi2007oblivious}
Claudio Orlandi, Alessandro Piva, and Mauro Barni.
\newblock Oblivious neural network computing via homomorphic encryption.
\newblock \emph{EURASIP Journal on Information Security}, 2007:\penalty0 1--11,
  2007.

\bibitem[Parascandolo et~al.(2016)Parascandolo, Huttunen, and
  Virtanen]{parascandolo2016taming}
Giambattista Parascandolo, Heikki Huttunen, and Tuomas Virtanen.
\newblock Taming the waves: sine as activation function in deep neural
  networks.
\newblock \emph{Unpublished, available on openreview.net}, 2016.

\bibitem[Pennington et~al.(2015)Pennington, Yu, and
  Kumar]{pennington2015spherical}
Jeffrey Pennington, Felix~X. Yu, and Sanjiv Kumar.
\newblock Spherical random features for polynomial kernels.
\newblock In \emph{Proceedings of the 28th International Conference on Neural
  Information Processing Systems-Volume 2}, pp.\  1846--1854, 2015.

\bibitem[Rahimi \& Recht(2007)Rahimi and Recht]{rahimi2007random}
Ali Rahimi and Benjamin Recht.
\newblock Random features for large-scale kernel machines.
\newblock In \emph{Advances in neural information processing systems}, 2007.

\bibitem[Sriperumbudur \& Szab{\'o}(2015)Sriperumbudur and
  Szab{\'o}]{sriperumbudur2015optimal}
Bharath Sriperumbudur and Zolt{\'a}n Szab{\'o}.
\newblock Optimal rates for random {Fourier} features.
\newblock In \emph{Advances in Neural Information Processing Systems}, 2015.

\bibitem[Thomas et~al.(2018)Thomas, Gu, Dao, Rudra, and
  R{\'e}]{thomas2018learning}
Anna~T Thomas, Albert Gu, Tri Dao, Atri Rudra, and Christopher R{\'e}.
\newblock Learning compressed transforms with low displacement rank.
\newblock \emph{Advances in Neural Information Processing Systems},
  2018:\penalty0 9052, 2018.

\bibitem[Wang et~al.(2017)Wang, Fu, Fu, and Wang]{wideanddeep_2017}
Ruoxi Wang, Bin Fu, Gang Fu, and Mingliang Wang.
\newblock Deep and cross network for ad click predictions.
\newblock In \emph{Proceedings of the ADKDD'17}, New York, NY, USA, 2017.
  Association for Computing Machinery.
\newblock \doi{10.1145/3124749.3124754}.

\bibitem[Wu et~al.(2018)Wu, Liu, Wang, Tang, and Xian]{wu2018ppolynets}
Wei Wu, Jian Liu, Huimei Wang, Fengyi Tang, and Ming Xian.
\newblock {Ppolynets}: Achieving high prediction accuracy and efficiency with
  parametric polynomial activations.
\newblock \emph{IEEE Access}, 6:\penalty0 72814--72823, 2018.

\bibitem[Xiao et~al.(2017)Xiao, Rasul, and Vollgraf]{fashionmnist}
Han Xiao, Kashif Rasul, and Roland Vollgraf.
\newblock {Fashion-MNIST}: a novel image dataset for benchmarking machine
  learning algorithms.
\newblock \emph{CoRR}, abs/1708.07747, 2017.
\newblock URL \url{http://arxiv.org/abs/1708.07747}.

\bibitem[Xie et~al.(2019)Xie, Liu, Wang, and Huang]{xie2019deep}
Jiaxuan Xie, Fanghui Liu, Kaijie Wang, and Xiaolin Huang.
\newblock Deep kernel learning via random {Fourier} features.
\newblock \emph{arXiv preprint arXiv:1910.02660}, 2019.

\bibitem[Yang et~al.(2014)Yang, Sindhwani, Avron, and Mahoney]{yang2014quasi}
Jiyan Yang, Vikas Sindhwani, Haim Avron, and Michael Mahoney.
\newblock Quasi-{Monte} {Carlo} feature maps for shift-invariant kernels.
\newblock In \emph{Proceedings of International Conference on Machine
  Learning}, 2014.

\bibitem[Yang et~al.(2015)Yang, Moczulski, Denil, De~Freitas, Smola, Song, and
  Wang]{yang2015deep}
Zichao Yang, Marcin Moczulski, Misha Denil, Nando De~Freitas, Alex Smola,
  Le~Song, and Ziyu Wang.
\newblock Deep fried convnets.
\newblock In \emph{Proceedings of the IEEE International Conference on Computer
  Vision}, pp.\  1476--1483, 2015.

\bibitem[Yu et~al.(2015)Yu, Kumar, Rowley, and Chang]{yu2015compact}
Felix~X. Yu, Sanjiv Kumar, Henry Rowley, and Shih-Fu Chang.
\newblock Compact nonlinear maps and circulant extensions.
\newblock \emph{arXiv preprint arXiv:1503.03893}, 2015.

\bibitem[Yu et~al.(2016)Yu, Suresh, Choromanski, Holtmann-Rice, and
  Kumar]{yu2016orthogonal}
Felix~X. Yu, Ananda~Theertha Suresh, Krzysztof~M Choromanski, Daniel~N
  Holtmann-Rice, and Sanjiv Kumar.
\newblock Orthogonal random features.
\newblock \emph{Advances in neural information processing systems},
  29:\penalty0 1975--1983, 2016.

\bibitem[Yu et~al.(2018)Yu, Bhaskara, Kumar, Gong, and Chang]{yu2018binary}
Felix~X. Yu, Aditya Bhaskara, Sanjiv Kumar, Yunchao Gong, and Shih-Fu Chang.
\newblock On binary embedding using circulant matrices.
\newblock \emph{Journal of Machine Learning Research}, 18\penalty0
  (150):\penalty0 1--30, 2018.

\end{thebibliography}
\bibliographystyle{iclr2022_conference}

\clearpage
\newpage

\appendix

\begin{center}
  {\LARGE HD-cos Networks: Efficient Neural Architectures\\[3mm] for Secure Multi-Party Computation} \\[8mm]
  {\large Supplementary Material}
  \vspace{8mm}
\end{center}

\section{Datasets and Preprocessing}
\label{sec:datasets}
This section contains technical details of all datasets and  preprocessing we use in experiments in the main text.

\paragraph{MNIST and Fashion-MNIST} The MNIST database contains 60,000 training and 10,000 test images of handwritten digits \citep{lecun2010mnist}. Each image is of size $28 \times 28$ pixels and represents one of the digits (0 to 9). The Fashion-MNIST dataset \citep{fashionmnist}  is a drop-in replacement for MNIST with the same image size and dataset size. Here each example is an image of one of the ten selected fashion products e.g., T-shirt, shirt, bag.
For these two datasets, models are trained for 200 epochs with a batch size of 128.

\paragraph{Higgs} The Higgs dataset represents a problem of detecting the signatures of Higgs Boson from background noise \citep{Baldi:2014kfa}.
This is a binary classification problem where each record consists of 28 floating-point features, of which 21 features representing kinematic properties, and additional seven features are derived from the 21 features by physicists to help distinguish the two classes.
The dataset is regarded as a challenging multivariate problem for benchmarking nonparametric two-sample tests i.e., statistical tests to detect the difference of between the distributions of the two classes \citep{chwialkowski2015fast,liu2020learning}.
Out of the available 11M records, we randomly subsample 7.7M records (70\%) and use them for training. No feature normalization or standardization is performed. The remaining 3.3M records are used for testing. We train models on Higgs data for 40 epochs with a batch size of 256.

\paragraph{Criteo} The Criteo dataset is a challenging ad click-through rate prediction problem, and was used as part of a public Kaggle competition.\footnote{The Criteo dataset was used as  a prediction problem as part of the Display Advertising Challenge in 2014 (a Kaggle competition) by Criteo (\url{https://www.kaggle.com/c/criteo-display-ad-challenge}).} Here, the label is binary (clicked or not clicked), and the feature vector encodes user information and the page being viewed. Each feature vector consists of 39 features, of which 13 are integer and 26 are categorical. There are 36 million records in total with 25.6\% labeled as clicked.
Following \cite{wideanddeep_2017}, we split the data into 80\%/10\%/10\% for training, evaluation and testing, respectively.
For each categorical feature, each value in its range is mapped to a trainable token embedding vector whose length is given by $6 \times \text{(cardinality of the feature)}^{1/4}$. Each integer feature $x \in \mathbb{N}$ is mapped to $\log (1+x)$. Concatenating all the preprocessed features gives a dense vector of length 1023, from which we feed to a standard multi-layer perceptron. All the models are trained for 150k update steps with a batch size of 512.

\clearpage
\section{Adaptive optimization and training stability}
\label{sec:stability}
In the main text, we omit the results for all learning rates and choose to present only the best performance across learning rates for each activation and type of weight structure, for the sake of brevity.
In this section, we report results for all learning rates, study the interplay between the network depth and the choice of the activation function, as well as investigate the use of an adaptive optimization method like Adam \citep{kingma2014adam}.

Note again that for all experimental results reported in this paper, the standard deviation computed across trials is in the order of 0.1 or smaller i.e., the only factor of variability is the initialization. We omit the standard deviation  for brevity.

\subsection{Optimization algorithms}
\label{sec:expr_optimizer}
We show that the optimization method used affects the performance of each activation function differently. To demonstrate, we consider the MNIST dataset and use the LeNet-5 \citep{lecun1998gradient} model; this is a convolutional neural network with two convolution layers interleaved with average pooling layers, followed by two dense layers. Each of the convolution and dense layers has
an activation function.
Activation function candidates are cosine (proposed), exponential, none (no activation function), ReLU, ReLU Poly(3), and square. The ReLU Poly(3) is a degree-3 polynomial approximation of ReLU investigated in  \citep{obla2020effective}. Unlike ReLU which is expensive to implement in the MPC setup (i.e., requires scalar comparison), a polynomial only relies on multiplication and addition, and is more suitable for MPC. We consider $f(x) = e^x-1$ instead of $e^x$ so that $f(0)=0$, a property common to many activation functions.
Table \ref{tbl:mnist} shows the test accuracy for
different optimization methods: Stochastic Gradient Descent (SGD) vs Adam \citep{kingma2014adam}, learning rates, and activation functions.

\begin{table}[ht]
\caption{Test accuracy of the LeNet-5 model \citep{lecun1998gradient} on MNIST for different optimization methods, learning rates, and activation functions. The optimization method used affects the performance of each activation function differently.
The cosine activation offers a generic, robust construct that can be trained without the need of precisely choosing the learning rate.}
\label{tbl:mnist}
\begin{center}
\begin{tabular}{llrrrrrr}
\toprule
    &        Learning rate    &          $1e^{-5}$ &                    $1e^{-4}$ &                     $1e^{-3}$ &                    0.01 &                    0.1 &                    0.5 \\
Optimizer & Activation &                            &                             &                             &                            &                            &                            \\
\midrule
\multirow{6}{*}{Adam} & Cosine & \colorbox{hlcolor!64}{.96} & \colorbox{hlcolor!100}{.99} &  \colorbox{hlcolor!17}{.92} & \colorbox{hlcolor!88}{.98} & \textcolor{fadecolor}{.54} & \textcolor{fadecolor}{.10} \\
    & $e^x -1$ & \colorbox{hlcolor!76}{.97} &  \colorbox{hlcolor!88}{.98} & \colorbox{hlcolor!100}{.99} & \textcolor{fadecolor}{.10} & \textcolor{fadecolor}{.10} & \textcolor{fadecolor}{.10} \\
    & None &  \colorbox{hlcolor!5}{.91} &  \colorbox{hlcolor!17}{.92} &  \colorbox{hlcolor!17}{.92} &  \colorbox{hlcolor!5}{.91} &  \colorbox{hlcolor!5}{.91} &  \colorbox{hlcolor!5}{.91} \\
    & ReLU & \colorbox{hlcolor!76}{.97} & \colorbox{hlcolor!100}{.99} &  \colorbox{hlcolor!88}{.98} & \colorbox{hlcolor!88}{.98} & \textcolor{fadecolor}{.45} & \textcolor{fadecolor}{.10} \\
    & ReLU Poly(3) & \textcolor{fadecolor}{.90} &  \colorbox{hlcolor!88}{.98} &  \colorbox{hlcolor!88}{.98} & \textcolor{fadecolor}{.86} & \textcolor{fadecolor}{.10} & \textcolor{fadecolor}{.10} \\
    & Square & \colorbox{hlcolor!64}{.96} & \colorbox{hlcolor!100}{.99} &  \textcolor{fadecolor}{.69} & \textcolor{fadecolor}{.81} & \textcolor{fadecolor}{.10} & \textcolor{fadecolor}{.10} \\
\cline{1-8}
\multirow{6}{*}{SGD} & Cosine & \textcolor{fadecolor}{.69} &  \colorbox{hlcolor!52}{.95} &  \colorbox{hlcolor!76}{.97} & \colorbox{hlcolor!88}{.98} & \colorbox{hlcolor!17}{.92} & \textcolor{fadecolor}{.10} \\
    & $e^x -1$ & \colorbox{hlcolor!29}{.93} &  \textcolor{fadecolor}{.63} &  \textcolor{fadecolor}{.39} & \textcolor{fadecolor}{.10} & \textcolor{fadecolor}{.10} & \textcolor{fadecolor}{.10} \\
    & None & \textcolor{fadecolor}{.88} &   \colorbox{hlcolor!5}{.91} &   \colorbox{hlcolor!5}{.91} & \colorbox{hlcolor!17}{.92} &  \colorbox{hlcolor!5}{.91} & \textcolor{fadecolor}{.10} \\
    & ReLU & \textcolor{fadecolor}{.52} &  \colorbox{hlcolor!64}{.96} &  \colorbox{hlcolor!76}{.97} & \colorbox{hlcolor!88}{.98} & \colorbox{hlcolor!88}{.98} & \textcolor{fadecolor}{.10} \\
    & ReLU Poly(3) & \textcolor{fadecolor}{.13} &  \colorbox{hlcolor!41}{.94} &  \textcolor{fadecolor}{.86} & \colorbox{hlcolor!88}{.98} & \textcolor{fadecolor}{.69} & \textcolor{fadecolor}{.10} \\
    & Square & \textcolor{fadecolor}{.27} &  \textcolor{fadecolor}{.10} &  \textcolor{fadecolor}{.40} & \textcolor{fadecolor}{.16} & \textcolor{fadecolor}{.10} & \textcolor{fadecolor}{.10} \\
\bottomrule
\end{tabular}
\end{center}
\end{table}

Several important observations can be made from Table \ref{tbl:mnist}. When no activation function is used, the model achieves roughly the same
performance for both optimization algorithms across different
learning rates. On the other hand, the performance of other nonlinear activations changes with different optimization methods. In particular, for activation function that grow fast (i.e., $e^x-1$, square), it is crucial to use an optimization method equipped with an adaptive learning rate like Adam to ensure stable training.
By contrast, for a standard choice such as the linearly growing ReLU function, and a bounded function such as cosine, the model can learn with either optimizer without a precise tuning of the learning rate.

\subsection{Training stability as a function of network depth}
\label{sec:depth_stable}
In this section, we expand results presented in Figure~\ref{fig:higgs_layers} in the main text by including the performance on Higgs for every learning rate tried.
Our conclusion remains the same as discussed in Section~\ref{sec:activations} which is that
 activation functions with bounded derivatives (e.g., cos, ReLU) can be trained with a wide range of learning rates, and different optimizers. By contrast, models relying on activation functions with high growth rate (e.g., exponential, square) can only be trained with a careful choice of the learning rate.
 This training instability is more pronounced when the number of layers is larger (deeper network).

\begin{table}[ht]
\caption{Experimental results on Higgs. Vary the numbers of hidden layers.}
\label{tbl:higgs}
\begin{tabular}{lllrrrrrr}
\toprule
     & Learning rate &     &               $1e^{-5}$ &                    $1e^{-4}$ &                     $1e^{-3}$ &                    0.01 &                    0.1 &                    0.5 \\
Optimizer & Activation & Number of layers &                            &                            &                             &                             &                            &                            \\
\midrule
\multirow{20}{*}{SGD} & \multirow{5}{*}{Cosine} & 0 & \textcolor{fadecolor}{.62} & \textcolor{fadecolor}{.63} &  \textcolor{fadecolor}{.64} &  \textcolor{fadecolor}{.64} & \textcolor{fadecolor}{.64} & \textcolor{fadecolor}{.64} \\
     &            & 1 & \textcolor{fadecolor}{.57} & \colorbox{hlcolor!12}{.65} &  \colorbox{hlcolor!50}{.68} &  \colorbox{hlcolor!62}{.69} & \colorbox{hlcolor!62}{.69} & \textcolor{fadecolor}{.58} \\
     &            & 2 & \textcolor{fadecolor}{.55} & \colorbox{hlcolor!12}{.65} &  \colorbox{hlcolor!62}{.69} &  \colorbox{hlcolor!74}{.70} & \colorbox{hlcolor!74}{.70} & \textcolor{fadecolor}{.56} \\
     &            & 3 & \textcolor{fadecolor}{.54} & \textcolor{fadecolor}{.63} &  \colorbox{hlcolor!74}{.70} &  \colorbox{hlcolor!87}{.71} & \colorbox{hlcolor!87}{.71} & \textcolor{fadecolor}{.57} \\
     &            & 4 & \textcolor{fadecolor}{.52} & \textcolor{fadecolor}{.60} &  \colorbox{hlcolor!74}{.70} &  \colorbox{hlcolor!62}{.69} & \colorbox{hlcolor!87}{.71} & \textcolor{fadecolor}{.57} \\
\cline{2-9}
     & \multirow{5}{*}{ReLU} & 0 & \textcolor{fadecolor}{.62} & \textcolor{fadecolor}{.63} &  \textcolor{fadecolor}{.64} &  \textcolor{fadecolor}{.64} & \textcolor{fadecolor}{.64} & \textcolor{fadecolor}{.64} \\
     &            & 1 & \textcolor{fadecolor}{.58} & \colorbox{hlcolor!25}{.66} &  \colorbox{hlcolor!50}{.68} &  \colorbox{hlcolor!62}{.69} & \colorbox{hlcolor!62}{.69} & \textcolor{fadecolor}{.61} \\
     &            & 2 & \textcolor{fadecolor}{.56} & \colorbox{hlcolor!25}{.66} &  \colorbox{hlcolor!74}{.70} &  \colorbox{hlcolor!74}{.70} & \colorbox{hlcolor!74}{.70} & \textcolor{fadecolor}{.63} \\
     &            & 3 & \textcolor{fadecolor}{.55} & \colorbox{hlcolor!12}{.65} &  \colorbox{hlcolor!74}{.70} &  \colorbox{hlcolor!74}{.70} & \colorbox{hlcolor!74}{.70} & \textcolor{fadecolor}{.56} \\
     &            & 4 & \textcolor{fadecolor}{.54} & \textcolor{fadecolor}{.64} &  \colorbox{hlcolor!74}{.70} &  \colorbox{hlcolor!87}{.71} & \colorbox{hlcolor!87}{.71} & \textcolor{fadecolor}{.56} \\
\cline{2-9}
     & \multirow{5}{*}{Square} & 0 & \textcolor{fadecolor}{.62} & \textcolor{fadecolor}{.63} &  \textcolor{fadecolor}{.64} &  \textcolor{fadecolor}{.64} & \textcolor{fadecolor}{.64} & \textcolor{fadecolor}{.64} \\
     &            & 1 & \textcolor{fadecolor}{.59} & \colorbox{hlcolor!12}{.65} &  \colorbox{hlcolor!25}{.66} &  \colorbox{hlcolor!25}{.66} & \colorbox{hlcolor!25}{.66} & \textcolor{fadecolor}{.55} \\
     &            & 2 & \textcolor{fadecolor}{.56} & \colorbox{hlcolor!12}{.65} &  \colorbox{hlcolor!50}{.68} &  \textcolor{fadecolor}{.56} & \textcolor{fadecolor}{.50} & \textcolor{fadecolor}{.50} \\
     &            & 3 & \textcolor{fadecolor}{.54} & \textcolor{fadecolor}{.61} &  \textcolor{fadecolor}{.52} &  \textcolor{fadecolor}{.50} & \textcolor{fadecolor}{.50} & \textcolor{fadecolor}{.50} \\
\cline{2-9}
     & \multirow{5}{*}{$e^x-1$} & 0 & \textcolor{fadecolor}{.62} & \textcolor{fadecolor}{.63} &  \textcolor{fadecolor}{.64} &  \textcolor{fadecolor}{.64} & \textcolor{fadecolor}{.64} & \textcolor{fadecolor}{.64} \\
     &            & 1 & \textcolor{fadecolor}{.60} & \colorbox{hlcolor!12}{.65} &  \colorbox{hlcolor!37}{.67} &  \textcolor{fadecolor}{.64} & \textcolor{fadecolor}{.52} & \textcolor{fadecolor}{.50} \\
     &            & 2 & \textcolor{fadecolor}{.51} & \textcolor{fadecolor}{.50} &  \textcolor{fadecolor}{.50} &  \textcolor{fadecolor}{.50} & \textcolor{fadecolor}{.50} & \textcolor{fadecolor}{.50} \\
     &            & 3 & \textcolor{fadecolor}{.50} & \textcolor{fadecolor}{.50} &  \textcolor{fadecolor}{.50} &  \textcolor{fadecolor}{.50} & \textcolor{fadecolor}{.50} & \textcolor{fadecolor}{.50} \\
\cline{1-9}
\cline{2-9}
\multirow{20}{*}{Adam} & \multirow{5}{*}{Cosine} & 0 & \textcolor{fadecolor}{.62} & \textcolor{fadecolor}{.64} &  \textcolor{fadecolor}{.64} &  \textcolor{fadecolor}{.64} & \textcolor{fadecolor}{.64} & \textcolor{fadecolor}{.63} \\
     &            & 1 & \colorbox{hlcolor!12}{.65} & \colorbox{hlcolor!50}{.68} &  \colorbox{hlcolor!62}{.69} &  \colorbox{hlcolor!62}{.69} & \colorbox{hlcolor!12}{.65} & \textcolor{fadecolor}{.50} \\
     &            & 2 & \colorbox{hlcolor!25}{.66} & \colorbox{hlcolor!74}{.70} &  \colorbox{hlcolor!87}{.71} &  \colorbox{hlcolor!87}{.71} & \textcolor{fadecolor}{.60} & \textcolor{fadecolor}{.50} \\
     &            & 3 & \colorbox{hlcolor!25}{.66} & \colorbox{hlcolor!87}{.71} & \colorbox{hlcolor!100}{.72} & \colorbox{hlcolor!100}{.72} & \textcolor{fadecolor}{.60} & \textcolor{fadecolor}{.50} \\
     &            & 4 & \colorbox{hlcolor!25}{.66} & \colorbox{hlcolor!87}{.71} & \colorbox{hlcolor!100}{.72} &  \textcolor{fadecolor}{.60} & \textcolor{fadecolor}{.60} & \textcolor{fadecolor}{.50} \\
\cline{2-9}
     & \multirow{5}{*}{ReLU} & 0 & \textcolor{fadecolor}{.63} & \textcolor{fadecolor}{.64} &  \textcolor{fadecolor}{.64} &  \textcolor{fadecolor}{.64} & \textcolor{fadecolor}{.64} & \textcolor{fadecolor}{.63} \\
     &            & 1 & \colorbox{hlcolor!12}{.65} & \colorbox{hlcolor!50}{.68} &  \colorbox{hlcolor!62}{.69} &  \colorbox{hlcolor!74}{.70} & \colorbox{hlcolor!37}{.67} & \textcolor{fadecolor}{.50} \\
     &            & 2 & \colorbox{hlcolor!25}{.66} & \colorbox{hlcolor!62}{.69} &  \colorbox{hlcolor!87}{.71} &  \colorbox{hlcolor!87}{.71} & \colorbox{hlcolor!62}{.69} & \textcolor{fadecolor}{.50} \\
     &            & 3 & \colorbox{hlcolor!25}{.66} & \colorbox{hlcolor!87}{.71} & \colorbox{hlcolor!100}{.72} & \colorbox{hlcolor!100}{.72} & \textcolor{fadecolor}{.64} & \textcolor{fadecolor}{.50} \\
     &            & 4 & \colorbox{hlcolor!25}{.66} & \colorbox{hlcolor!74}{.70} & \colorbox{hlcolor!100}{.72} & \colorbox{hlcolor!100}{.72} & \textcolor{fadecolor}{.50} & \textcolor{fadecolor}{.50} \\
\cline{2-9}
     & \multirow{5}{*}{Square} & 0 & \textcolor{fadecolor}{.63} & \textcolor{fadecolor}{.64} &  \textcolor{fadecolor}{.64} &  \textcolor{fadecolor}{.64} & \textcolor{fadecolor}{.64} & \textcolor{fadecolor}{.63} \\
     &            & 1 & \textcolor{fadecolor}{.64} & \colorbox{hlcolor!25}{.66} &  \colorbox{hlcolor!37}{.67} &  \colorbox{hlcolor!37}{.67} & \colorbox{hlcolor!37}{.67} & \colorbox{hlcolor!25}{.66} \\
     &            & 2 & \colorbox{hlcolor!12}{.65} & \colorbox{hlcolor!50}{.68} &  \colorbox{hlcolor!62}{.69} &  \colorbox{hlcolor!62}{.69} & \colorbox{hlcolor!62}{.69} & \colorbox{hlcolor!62}{.69} \\
     &            & 3 & \textcolor{fadecolor}{.57} & \colorbox{hlcolor!50}{.68} &  \colorbox{hlcolor!74}{.70} &  \colorbox{hlcolor!74}{.70} & \textcolor{fadecolor}{.64} & \textcolor{fadecolor}{.59} \\
\cline{2-9}
     & \multirow{5}{*}{$e^x-1$} & 0 & \textcolor{fadecolor}{.62} & \textcolor{fadecolor}{.64} &  \textcolor{fadecolor}{.64} &  \textcolor{fadecolor}{.64} & \textcolor{fadecolor}{.64} & \textcolor{fadecolor}{.63} \\
     &            & 1 & \textcolor{fadecolor}{.64} & \colorbox{hlcolor!37}{.67} &  \colorbox{hlcolor!50}{.68} &  \colorbox{hlcolor!50}{.68} & \textcolor{fadecolor}{.57} & \textcolor{fadecolor}{.50} \\
     &            & 2 & \textcolor{fadecolor}{.53} & \textcolor{fadecolor}{.50} &  \textcolor{fadecolor}{.50} &  \textcolor{fadecolor}{.54} & \textcolor{fadecolor}{.50} & \textcolor{fadecolor}{.50} \\
     &            & 3 & \textcolor{fadecolor}{.50} & \textcolor{fadecolor}{.50} &  \textcolor{fadecolor}{.50} &  \textcolor{fadecolor}{.50} & \textcolor{fadecolor}{.50} & \textcolor{fadecolor}{.50} \\
\bottomrule
\end{tabular}
\end{table}

\section{Structured weight matrices}
Here we expand results in Section~\ref{sec:structured_mat} and present more results showing the influence of structured weight matrices on the model performance. Table~\ref{tab:all_hadamard} and Table~\ref{tab:all_lowrank} give results in the same setting as in Table~\ref{tab:all_dense} presented in the main text, except that each dense layer is replaced with the proposed Hadamard layer (Table~\ref{tab:all_hadamard}), and a dense layer with a low-rank weight matrix (Table~\ref{tab:all_dense}).

\begin{table}[ht]
    \centering
    \caption{Test accuracy on all the datasets using the proposed Hadamard layer. We report test accuracy except Criteo where we report AUC-PR due to its skewed label distribution.}
    \begin{tabular}{c c c c c}
      Activation  &  MNIST & Fashion-MNIST & Higgs & Criteo  \\
      \midrule
      Cosine &  \textbf{98.5} & 88.9 & \textbf{70.3} & 58.0 \\
      $e^x - 1$ &  94.0 & 85.9 & 50.0 & 57.4 \\
        None  &  92.2 & 83.6 & 63.5 & 48.5 \\
         ReLU  &  98.4 & \textbf{89.2} & 69.8 & 46.6 \\
         ReLU Polyfit(3)  &  \textbf{98.5} & 88.7 & 62.4 & 52.7 \\
          square  &  45.4 & 35.7 & 54.4 & 25.6 \\
          \midrule
    \end{tabular}
    \label{tab:all_hadamard}
\end{table}

\begin{table}[ht]
    \centering
    \caption{Test accuracy on all the datasets using low-rank dense layers. We report test accuracy except Criteo where we report AUC-PR due to its skewed label distribution.}
    \begin{tabular}{c c c c c}
      Activation  &  MNIST & Fashion-MNIST & Higgs & Criteo  \\
      \midrule
      Cosine &  94.4 & 84.9 & \textbf{67.9} & 25.6 \\
      $e^x - 1$ &  86.2 & 54.2 & 50.0 & 57.4 \\
        None  &  68.4 & 73.8 & 63.5 & 56.8 \\
         ReLU  &  \textbf{94.8} & \textbf{86.3} & \textbf{67.9} & 57.7 \\
         ReLU Polyfit(3) &  94.5 & 83.8 & 56.7 & 25.6 \\
          Square  &  60.6 & 35.0 & 50.0 & 25.6 \\
          \midrule
    \end{tabular}
    \label{tab:all_lowrank}
\end{table}

For completeness, we also compare to other types of weight matrices: 1) PHD \citep{yang2015deep}, 2) circulant matrix \citep{yu2018binary}. The PHD approach is similar to our proposed Hadamard-Diagonal layer. The weight matrix is given by $W = PHD$ where $P$ is a sparse matrix with non-zero entries drawn from a Gaussian distribution, $H$ is the Hadamard matrix, and $D$ is the diagonal weight matrix.
The PHD has the same number of parameters as our Hadamdard-Diagonal (HD) approach. The main difference is the presence of the sparse matrix $P$. The circulant matrix approach of \cite{yu2018binary} also only requires $d$ weights parameters. The weight matrix is formed by forming a circulant matrix from these weights. We present a comparison of these three related approaches on Fashion-MNIST in Table~\ref{tab:hadamard_vs_phd_and_circulant}.

We observe that our proposed Hadamard layer performs better than
the PHD approach. This suggests that the presence of the sparse
random matrix $P$ hurts the expressiveness of the model.
It can be seen that the circulant weight matrix generally does not performance as well as the other two variants.

 \begin{table}[ht]
    \centering
    \caption{Comparison of Hadamard transform against other structures imposed on the weight matrix. Notice how Hadamard works better across most activation functions.}
    \label{tab:hadamard_vs_phd_and_circulant}
    \begin{tabular}{c c c c c}
      Activation  &  Hadamard & PHD \citep{yang2015deep} & Circulant \citep{cheng2015exploration} \\
      \midrule
    Cosine & \textbf{88.6} & 88.1 & 80.3\\
$e^x - 1$ & \textbf{80.2} & 59.7 & 10.0\\
None & \textbf{83.7} & 83.5 & 83.6\\
ReLU & 89.3 & 89.0 & \textbf{89.4}\\
ReLU Polyfit(3) & \textbf{88.5} & 87.6 & 87.8\\
Square & 36.0 & 36.0 & \textbf{87.6}\\
          \midrule
    \end{tabular}

\end{table}

\end{document}